%% file: main.tex
\documentclass[11pt]{article}

\input{pre_doc}

\usepackage[utf8]{inputenc} 
\usepackage[T1]{fontenc}    
\usepackage{hyperref}       
\usepackage{url}            
\usepackage{booktabs}       
\usepackage{amsfonts}       
\usepackage{nicefrac}       
\usepackage{microtype}      
\usepackage{xcolor}         
\usepackage{algorithm}
\usepackage{algpseudocode}
\usepackage[numbers]{natbib}
\usepackage[margin=1in]{geometry}

\renewcommand{\narxiv}[1]{}

\title{Privately Clipping Heavy-Tailed Data}
\author{Matthew Joseph\thanks{mtjoseph@google.com. Google Research.} \and Alex Kulesza\thanks{kulesza@google.com. Google Research.}}

\begin{document}

\maketitle

\begin{abstract}
  Summing vectors is a basic task in differentially private data analysis. Standard private sum algorithms clip vectors to some bound, sum them, and add noise scaled to the bound. The differential privacy literature has developed a deep library of noise addition mechanisms, but it offers few tools for privately choosing a clipping bound based on the data. We introduce a method that privately chooses a clipping bound by optimizing \emph{residual coherence}, the directional alignment of vectors exceeding the bound, against the variance cost of additional noise. We prove a close relationship between residual coherence and bias for a general class of heavy-tailed data regimes and show empirically that, across several datasets, our method outperforms existing baselines.
\end{abstract}

\input{sections/intro}
\input{sections/prelims}
\input{sections/algorithm}
\input{sections/experiments}
\input{sections/discussion}
\input{sections/ai}

\newpage

\bibliographystyle{plainnat}
\bibliography{references}
\newpage
\appendix
\input{sections/appendix}

\end{document}

%% file: pre_doc.tex
\usepackage{amsmath}
\usepackage{amssymb}
\usepackage{amsthm}
\usepackage{amsfonts}
\usepackage{mathtools}
\usepackage{bbm}
\usepackage{pifont}
\usepackage{dsfont}
\usepackage{microtype}
\usepackage{booktabs}
\usepackage{subcaption}
\usepackage{graphicx}
\usepackage[dvipsnames]{xcolor}
\usepackage{algorithm}

\usepackage{thmtools}
\usepackage{thm-restate}

\usepackage[colorlinks, citecolor=blue, linkcolor=blue, urlcolor=blue]{hyperref}

\usepackage[capitalize,noabbrev]{cleveref}
\creflabelformat{equation}{#2#1#3}

\input{macros}

%% file: macros.tex
\newtheorem{theorem}{Theorem}
\newtheorem{definition}{Definition}
\newtheorem{lemma}{Lemma}
\newtheorem{assumption}{Assumption}

\newcommand{\abalg}{\mathsf{ApproxBounds}}
\newcommand{\atalg}{\mathsf{AboveThreshold}}
\newcommand{\abovethreshold}[1]{\mathsf{AboveThreshold}\left(#1\right)}
\newcommand{\bdkualg}{\mathsf{CoinPress}}

\newcommand{\calD}{\ensuremath{\mathcal{D}}}
\newcommand{\calO}{\ensuremath{\mathcal{O}}}

\newcommand{\calX}{\ensuremath{\mathcal{X}}}

\newcommand{\clip}[2]{\mathsf{clip}\left(#1, #2\right)}
\newcommand{\cov}[1]{\mathsf{Cov}\left(#1\right)}
\newcommand{\E}[2]{\mathbb{E}_{#1}\left[ #2 \right]}
\newcommand{\expo}[1]{\mathsf{Expo}\left(#1\right)}

\newcommand{\hlyalg}{\mathsf{HLY\text{-}UQ}}
\newcommand{\ksualg}{\mathsf{CDPHDME}}

\newcommand{\mse}[1]{\mathrm{MSE}\left(#1\right)}
\newcommand{\optalg}{\mathsf{Optimal}}
\renewcommand{\P}[2]{\mathbb{P}_{#1}\left[#2\right]}
\newcommand{\rc}[1]{\mathsf{RC}\left(#1\right)}
\newcommand{\rcalg}{\mathsf{OptimizeRC}}
\newcommand{\rhoclip}{\rho_{\textsc{clip}}}
\newcommand{\rhosum}{\rho_{\textsc{sum}}}
\newcommand{\tr}[1]{\mathsf{Tr}\left(#1\right)}
\newcommand{\uqalg}{\mathsf{UnboundedQuantile}}
\newcommand{\uqalgshort}{\mathsf{UQ}}

\newcommand{\narxiv}[1]{#1}

%% file: sections/intro.tex
\section{Introduction}
A differentially private (DP)~\cite{DMNS06} computation controls any single data contributor's impact on the final result. For DP vector sums, this control is typically achieved by clipping vectors to a maximum norm $C$, summing the clipped vectors, and adding noise scaled to $C$. Larger clipping bounds reduce bias but increase noise, so $C$ must be chosen carefully to minimize overall error.

The DP literature offers little guidance for this choice (see Section~\ref{subsec:related_work}). Instead, it typically assumes that the algorithm operator has enough domain knowledge to choose a good clipping bound without examining the dataset. This is sometimes plausible. Most people, for example, can probably provide tight bounds on human age or height. When prior knowledge is lacking, a slightly more involved solution is to clip to a private estimate of an extreme (e.g., 0.99) quantile of the data norms. In general, choosing a clipping bound should be straightforward for any data \emph{light-tailed} enough that typical points and outliers lie on similar scales.

This is not true for \emph{heavy-tailed} data, where a significant minority of users may contribute vectors orders of magnitude larger than those of the ``typical'' core. A large body of empirical work has observed that heavy-tailed distributions arise frequently across a variety of real-world contexts~\cite{Z49, G99, P14}, and are particularly common in internet activity~\cite{AH02}. Unfortunately, light-tailed heuristics fail in this setting. The 0.9, 0.95, and 0.99 norm quantiles likely lie on vastly different scales, and there is no principled reason to favor one in general: the 0.99 quantile may work well when a dataset's large vectors are well-aligned and poorly when they are not.

Perhaps not coincidentally, empirical work finds that selecting a clipping bound is a significant challenge to deploying DP~\cite{SSHSV23, CDEGH+23, AKV24}. This persists even though private sums (often in the form of private histograms) are a common primitive in public applications of DP~\cite{A23, BDDDE+20, WZLDS+20, G26, ABDHJ+23}.

\paragraph{Contributions.} This work proposes a fully differentially private (and, we suggest, effectively parameter-free) method for choosing an $\ell_2$ clipping bound $C$ for computing a private $d$-dimensional sum. The method applies $\atalg$~\cite{DNRRV09}, a canonical DP algorithm for privately answering a sequence of low-sensitivity binary queries with privacy cost scaling only with the number of ``yes'' responses. We use $\atalg$ to take a candidate sequence of clipping bounds $C_1 < C_2 < \cdots$ and check the current bound's \emph{residual coherence} against a fixed threshold.

\begin{definition}
\label{def:rc}
    Given a database of vectors $X = \{x_1, \ldots, x_n \in \mathbb{R}^d\}$ with norms $R_i = \|x_i\|_2$ and unit directions $U_i = x_i / R_i$, along with candidate clipping bound schedule $0 < C_1 < C_2 < \cdots$, the \emph{residual coherence} at $C_j$ is
    \begin{equation*}
        \rc{C_j; X} \coloneqq \left\|\sum_{R_i > C_j} U_i\right\|_2.
    \end{equation*}
\end{definition}
Importantly, residual coherence has bounded sensitivity of 2 under swap DP and 1 under add-remove DP. If the vectors are nonnegative, both sensitivities are 1. This makes it amenable to an algorithm like $\atalg$.

$\rc{C; X}$ (which we will shorthand as $\rc{C}$) is large exactly when the database has many well-aligned vectors whose norms exceed $C$. A simple motivation is that heavy-tailed data often exhibits a two-phase trend: there is a large ``core'' of individuals with similar small vectors (e.g., casual movie viewers selecting a few popular movies) and a smaller ``tail'' of individuals diverging into more numerous niche interests. Increasing the clipping bound to shrink residual coherence can therefore approximate the point at which the signal of the clipped tail is outweighed by the noise it demands.

A full description of our algorithm, $\rcalg$, appears in Section~\ref{subsec:alg_desc}, and we describe an efficient implementation in Section~\ref{subsec:rc_imp}. Section~\ref{subsec:rc_proxy} develops a simple utility result showing that residual coherence serves as a principled proxy for bias reduction in (a stylized class of) heavy-tailed data. Section~\ref{sec:experiments} evaluates our method across over a dozen real-world datasets.

\subsection{Related Work}
\label{subsec:related_work}
\citet{AKMV19} show that the optimal clipping bound for an $\varepsilon$-DP scalar sum is the $1 - \frac{1}{\varepsilon}$ quantile, but do not consider vector sums.~\citet{D23} suggests using $\atalg$ to estimate a fixed quantile of the data norms as a clipping bound and calls the method $\uqalg$ (which we shorthand $\uqalgshort$) as it does not require \emph{a priori} bounds. While this avoids selecting $C$ directly, it does so by shifting the hyperparameter burden to the choice of quantile. Furthermore, as we demonstrate empirically, the optimal quantile varies significantly across heavy-tailed datasets (Section~\ref{sec:experiments}) and no fixed quantile yields a good bias-variance tradeoff in general. Methods that simply attempt to privately estimate the maximum of a dataset~\cite{WZLDS+20, NA25} face similar issues.

A private sum can also be computed, in a somewhat roundabout way, by privately estimating the mean and scaling its output by the database size $n$. Most mean estimation algorithms in the literature are ruled out by focusing on light-tailed (e.g., subgaussian) distributions or requiring their own careful parameter selections. One exception is the quantile-based algorithm of~\citet{HLY21}, which we discuss and evaluate in Section~\ref{sec:experiments}.

Several works have considered clipping gradients for DP-SGD. AdaCliP~\cite{PSYRK19} and GeoClip~\cite{GTSK25} privately track gradient mean and variance statistics and use them to update subsequent clipping norm choices, but this consumes privacy budget over the course of many updates and introduces new hyperparameters that are tuned non-privately. \citet{ATMR21} propose an adaptive strategy that privately tracks a specified quantile (e.g., the median) of the update norm distribution during training, which shares the same general quantile drawbacks described above.~\citet{BWZK23} introduce a method that avoids choosing a clipping norm by normalizing each gradient to a unit vector, but this again shifts the choice to another hyperparameter (the learning rate), which is tuned non-privately.

%% file: sections/prelims.tex
\section{Preliminaries}
We use the ``swap'' notion of neighboring databases, primarily for consistency with the setup used by~\citet{D23}. Our algorithm is easily modified for add-remove privacy.
\begin{definition}
    Two databases $X, X'$ from domain $\calX$ are \emph{neighbors}, denoted $X \sim X'$, if one can be obtained from the other by replacing at most one data point.
\end{definition}
We use concentrated differential privacy (CDP) for similar reasons. The basic ideas of our method readily extend to other notions like pure and approximate DP, at some cost to composition neatness.
\begin{definition}[\cite{BS16}]
    Randomized mechanism $M \colon \calX \to \calO$ is $\rho$-CDP if for all $X \sim X' \in \calX$ and all $\alpha > 1$, $D_\alpha(M(X) || M(X')) \leq \rho \alpha$, where $D_\alpha$ denotes $\alpha$-Rényi divergence.
\end{definition}

Our method is tuned assuming that the final private sum is computed by adding $\rho$-CDP (spherical) Gaussian noise. The derivation extends to any noise with a closed-form expression for the trace of its covariance matrix, but we use $\rho$-CDP Gaussian noise for its generality, widespread use, and clean privacy composition. We recall the basic statement for $\rho$-CDP Gaussian noise below. 
\begin{lemma}[\cite{BS16}]
    Given statistic $f \colon \calX \to \mathbb{R}^d$ of $\ell_2$ sensitivity $C$ and database $X$, releasing $f(X) + Z$ where each $Z_j \sim_{iid} N(0, \tfrac{C^2}{2\rho})$ is $\rho$-CDP.
\end{lemma}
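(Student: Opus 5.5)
The approach is to use the Rényi-divergence characterization of $\rho$-CDP from the definition above. Releasing $f(X)+Z$ amounts to comparing two spherical Gaussians, $N(f(X), \sigma^2 I_d)$ and $N(f(X'), \sigma^2 I_d)$, with common variance $\sigma^2 = \tfrac{C^2}{2\rho}$. For neighbors $X \sim X'$, the $\ell_2$ sensitivity assumption gives $\|f(X)-f(X')\|_2 \le C$. So the whole argument reduces to one divergence computation followed by plugging in $\sigma^2$.

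The key steps, in order, are as follows.

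\begin{itemize}
\item First, reduce to one dimension. The divergence between product measures with independent coordinates is additive across coordinates. Alternatively, rotate so that the mean difference $\mu = f(X)-f(X')$ lies along the first axis; spherical Gaussians are rotation invariant. Either way, $D_\alpha$ equals the Rényi divergence between $N(0,\sigma^2)$ and $N(\|\mu\|_2,\sigma^2)$, since the remaining coordinates are identical and contribute zero.

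\item Second, compute the scalar Rényi divergence directly:
\[
D_\alpha\bigl(N(0,\sigma^2) \,\|\, N(m,\sigma^2)\bigr) = \frac{1}{\alpha-1}\log \int p^\alpha q^{1-\alpha}.
\]
The exponent in the integrand is a quadratic in $x$. Completing the square leaves a Gaussian integral equal to $1$ times the factor $\exp\bigl(\alpha(\alpha-1)m^2/(2\sigma^2)\bigr)$. Hence
\[
D_\alpha = \frac{\alpha m^2}{2\sigma^2}.
\]

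\item Third, substitute $m \le C$ and $\sigma^2 = C^2/(2\rho)$. This gives $D_\alpha \le \alpha C^2 \cdot \frac{2\rho}{2C^2} = \rho\alpha$ for every $\alpha>1$, which is exactly $\rho$-CDP.
\end{itemize}

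The only step requiring care is the completing-the-square computation in the second step, in particular the sign bookkeeping for $\alpha(\alpha-1)$. The reduction to one dimension is routine once rotation invariance is noted. Finally, if $f(X)=f(X')$ the divergence is zero, so the inequality holds trivially.
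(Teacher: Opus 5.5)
Your proof is correct. The paper gives no argument of its own and simply cites Bun and Steinke, whose proof is this same one: the closed form $D_\alpha\bigl(N(\mu,\sigma^2 I_d)\,\|\,N(\nu,\sigma^2 I_d)\bigr) = \alpha\|\mu-\nu\|_2^2/(2\sigma^2)$, followed by substituting $\|\mu-\nu\|_2 \le C$ and $\sigma^2 = C^2/(2\rho)$. (Minor wording point: ``the remaining coordinates are identical and contribute zero'' holds only for your rotation route; under the additivity route you instead sum the per-coordinate terms $\alpha\mu_j^2/(2\sigma^2)$, which gives the same total.)
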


Although $\atalg$ was introduced by~\citet{DNRRV09}, we use the formulation (and improved analysis) of~\citet{D23}. Pseudocode appears in Algorithm~\ref{alg:at}, where $\expo{\lambda}$ has density $f_\lambda(x) \propto \tfrac{1}{\lambda}\exp(-x/\lambda)$ and support $\mathbb{R}_{\geq 0}$.

\begin{algorithm}[h]
\caption{$\atalg$}
\label{alg:at}
\begin{algorithmic}[1]
\Require Dataset $X$, queries $\{f_j : \mathcal{X} \rightarrow \mathbb{R}\}$ with sensitivity $\Delta$, threshold $T$, privacy budget $\rho$
\State Set $\tilde T = T + \expo{\Delta / \sqrt{2\rho}}$
\For{each query $j$}
        \State Set $\nu_j = \expo{2\Delta / \sqrt{2\rho}}$
        \State \textbf{if} $f_j(X) + \nu_j \ge \tilde{T}$ \textbf{then return} $\top$ and halt \textbf{else return} $\bot$ and continue
\EndFor
\end{algorithmic}
\end{algorithm}

\begin{lemma}[Corollary 3.1 and Theorem 1~\cite{D23}]
\label{lem:at_cdp}
    $\atalg$ is $4\rho$-CDP.
\end{lemma}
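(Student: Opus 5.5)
The claim that $\atalg$ is $4\rho$-CDP follows from the Rényi-divergence analysis of the sparse vector technique in~\citet{D23}, with exponential noise. The argument has two parts. First, the output stream is determined by the index $k$ of the first query to cross. Second, a coupling or ratio bound on the probability of outputting $k$ under neighboring databases is converted into a Rényi divergence bound via the sufficient condition for CDP. I would follow the structure of Corollary 3.1 and Theorem 1 of~\citet{D23}, checking that the scales $\Delta/\sqrt{2\rho}$ and $2\Delta/\sqrt{2\rho}$ give total parameter $4\rho$.

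\textbf{Reduction to the halting index.} Fix neighbors $X \sim X'$. The output is $\bot^{k-1}\top$ (or all $\bot$), so it suffices to bound $D_\alpha$ between the distributions of the halting index $K$ and $K'$. Condition on the threshold noise $\tilde T - T = \eta$. Then
\[
P[K = k] = \mathbb{E}_\eta\Bigl[\prod_{j<k} P[\nu_j < T+\eta - f_j(X)] \cdot P[\nu_k \ge T+\eta-f_k(X)]\Bigr].
\]

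\textbf{Shifting the threshold noise.} Shift $\eta$ by $\Delta$, pairing $\eta$ under $X$ with $\eta+\Delta$ under $X'$. This makes every "below" event under $X$ imply the corresponding event under $X'$, since $f_j(X') \le f_j(X)+\Delta$. The density ratio of $\expo{\Delta/\sqrt{2\rho}}$ under the shift is $e^{\sqrt{2\rho}}$; this is one-sided because of the support, and I would handle that as D23 does. The top event at $k$ then requires $\nu_k$ to clear a gap of at most $2\Delta$. By memorylessness of the exponential with scale $2\Delta/\sqrt{2\rho}$, this costs a further multiplicative factor $e^{\sqrt{2\rho}}$.

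\textbf{Main obstacle: converting to CDP.} Combining the two factors gives a pure-DP-style ratio bound $P[K=k]\le e^{\varepsilon}P[K'=k]$ with $\varepsilon = 2\sqrt{2\rho}$. Pure $\varepsilon$-DP implies $\tfrac{1}{2}\varepsilon^2$-CDP, which gives $\tfrac{1}{2}(2\sqrt{2\rho})^2 = 4\rho$. The delicate point is that the threshold shift is one-sided, so the ratio must hold in both directions. For the reverse direction I would shift $\eta$ by $-\Delta$ and use the nonnegative support: mass with $\eta<\Delta$ is the problematic region. This is exactly where I expect to invoke D23's refined analysis, Theorem 1, which shows that the exponential-noise SVT satisfies the Rényi bound directly without a symmetric pure-DP ratio. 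Since the lemma is cited verbatim, the cleanest route is to instantiate D23's theorem with their noise parameters. Their $\varepsilon$-parameterization uses threshold scale $\Delta/\varepsilon$ and query scale $2\Delta/\varepsilon$, so I would substitute $\varepsilon=\sqrt{2\rho}$ and read off the CDP constant $\tfrac{1}{2}(2\varepsilon)^2 = 4\rho$.
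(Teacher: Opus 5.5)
Your conclusion matches the paper, which gives no argument beyond citing D23, and your arithmetic $\tfrac12(2\sqrt{2\rho})^2=4\rho$ is right. However, the ``main obstacle'' you identify does not exist. Your own sketch is already a complete, self-contained proof, so you do not need to defer to a refined R\'enyi analysis.

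Write $g_X(\eta)$ for the probability of halting at $k$ given threshold noise $\eta$. Your shift gives $p(\eta)g_X(\eta)\le e^{2\sqrt{2\rho}}\,p(\eta+\Delta)g_{X'}(\eta+\Delta)$ for $\eta\ge 0$. Integrating,
\[
P[K=k]\le e^{2\sqrt{2\rho}}\int_\Delta^\infty p(\eta')g_{X'}(\eta')\,d\eta'\le e^{2\sqrt{2\rho}}P[K'=k].
\]
The one-sided support actually helps you: $\eta\mapsto\eta+\Delta$ maps $[0,\infty)$ into itself, and discarding $[0,\Delta)$ only weakens the bound in the direction you need. The reverse inequality is the same argument with $X$ and $X'$ exchanged, because the neighbor relation is symmetric and you used only $|f_j(X)-f_j(X')|\le\Delta$. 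The shift by $-\Delta$ is therefore unnecessary, and it does break down on $[0,\Delta)$.

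Two small points finish the argument.
\begin{itemize}
\item The never-halting outcome needs only the threshold factor.
\item The top-event step holds for every real $a$ via $P[\nu_k\ge a+g]\ge e^{-g\sqrt{2\rho}/(2\Delta)}P[\nu_k\ge a]$ for $0\le g\le 2\Delta$. For $a<0$ this is not literal memorylessness, but it is immediate since $P[\nu_k\ge a]=1$ there.
\end{itemize}
This gives $2\sqrt{2\rho}$-DP, and the Bun--Steinke pure-to-zCDP conversion then gives exactly $4\rho$-CDP. So your description of D23's Theorem 1 as avoiding a symmetric pure-DP ratio is unsupported, and in any case unnecessary for this constant.
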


%% file: sections/algorithm.tex
\section{Optimizing Residual Coherence}
\subsection{Algorithm Description}
\label{subsec:alg_desc}
The starting point for our algorithm is a straightforward decomposition of the final error of a DP sum. As mentioned earlier, for simplicity we assume that the private sum is computed by clipping each vector to $\ell_2$ norm $C$, summing the clipped vectors, and then adding $\rhosum$-CDP Gaussian noise tuned to $\ell_2$ sensitivity $2C$, though our method can handle any form of spherical noise with a closed expression for its variance.

We measure error using Mean Squared Error (MSE) of the final vector sum. Given database of vectors $\{x_1, \ldots, x_n \in \mathbb{R}^d\}$, define norms $R_i \coloneqq \|x_i\|_2$ and unit directions $U_i \coloneqq \frac{x_i}{\|x_i\|_2}$. For initial exposition, suppose additionally that the vectors are nonnegative (see Section~\ref{subsec:rc_proxy} for analysis of the general case). A standard bias-variance decomposition yields
\begin{align*}
    \mse{C} =&\ \frac{1}{d}\E{}{\left\|\sum_{i=1}^n x_i - \left(\sum_{i=1}^n \clip{x_i}{C} + Z\right)\right\|_2^2} \\
    =&\ \frac{1}{d}\left\|\E{}{\sum_{i=1}^n x_i - \clip{x_i}{C}}\right\|_2^2 + \frac{\tr{\cov{Z}}}{d} \\
    =&\ \frac{1}{d}\left\|\sum_{R_i > C} (R_i-C)U_i\right\|_2^2 + \frac{2C^2}{\rhosum}
\end{align*}
since the sum clipped to $C$ has swap sensitivity $2C$. To find the optimal bound $C^*$, we take the derivative of the MSE with respect to $C$\footnote{The exact MSE is technically only piecewise differentiable due to the clipping operation, but evaluating its derivative where it exists provides the central intuition.}:
\begin{align*}
    \frac{d}{dC} \mse{C} =&\ \frac{2}{d}\sum_{R_i > C} (R_i-C)U_i \cdot \left(-\sum_{R_i > C} U_i\right) + \frac{4C}{\rhosum} \\
    =&\ -\frac{2}{d}\sum_{R_i > C}\sum_{R_j > C}(R_i-C)\langle U_i, U_j \rangle + \frac{4C}{\rhosum}.
\end{align*}
When the underlying vectors are nonnegative, so is each $\langle U_i, U_j \rangle$, and the overall MSE derivative goes from negative to positive as $C$ grows. We therefore set the derivative expression to 0 and get
\begin{equation}
\label{eq:optimal_C}
     \sum_{R_i > C}\sum_{R_j > C}(R_i-C)\langle U_i, U_j \rangle = \frac{2Cd}{\rhosum}.
\end{equation}
The LHS sum in Equation~\ref{eq:optimal_C} is the bias derivative. Evaluating it requires knowing the private magnitude depths $R_i - C$. Since these have unbounded sensitivity without clipping, it is not obvious how to estimate them in a differentially private way.  We instead approximate the bias derivative by using residual coherence (Definition~\ref{def:rc}) as a proxy:
\begin{equation}
\label{eq:approximation}
    \sum_{R_i > C}\sum_{R_j > C}(R_i-C)\langle U_i, U_j \rangle \approx C \cdot \rc{C}^2.
\end{equation}
Unlike the bias derivative, $\rc{C}$ has bounded sensitivity; Section~\ref{subsec:rc_proxy} describes other properties that justify its use. Our algorithm uses $\atalg$ to search for the threshold where $C \cdot \rc{C}^2$ balances the variance derivative $\frac{2Cd}{\rhosum}$, i.e., residual coherence equals $\sqrt{\frac{2d}{\rhosum}}$.  Pseudocode appears in Algorithm~\ref{alg:rc}. Note that we negate the queries and threshold because $\atalg$ stops when the query exceeds the threshold.

\begin{algorithm}[h]
\caption{$\rcalg$}
\label{alg:rc}
\begin{algorithmic}[1]
    \Require Dataset $X$, ambient dimension $d$, bounds $\{C_j\}$, budgets $\rhosum, \rhoclip$
    \State $T \leftarrow -\sqrt{\frac{2d}{\rhosum}}$ 
    \State Return $C_j$ where $\abovethreshold{X, -\{\rc{C_j}\}_{j=1}^\infty, \Delta = 1, T, \rhoclip / 4}$ halts
\end{algorithmic}
\end{algorithm}

To see that $\rcalg$ is private, observe that $\rc{C}$ is 1-sensitive for nonnegative vectors and 2-sensitive for general vectors, then apply Lemma \ref{lem:at_cdp}. 
\begin{lemma}
\label{lem:rc_cdp}
    If the data vectors are nonnegative, then $\rcalg$ is $\rhoclip$-CDP. For general vectors, it is $4\rhoclip$-CDP.
\end{lemma}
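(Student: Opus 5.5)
The proof has two parts. First I bound the sensitivity of each query $\rc{C_j}$ under swap neighbors. Then I reduce to Lemma~\ref{lem:at_cdp} by a rescaling argument that accounts for running $\atalg$ with $\Delta = 1$ even when the true sensitivity is $2$. Negating the queries and the threshold does not change the sensitivity. Returning $C_j$ is post-processing of the halting index, so it costs no extra privacy. Zero vectors never satisfy $R_i > C_j > 0$, so undefined directions $U_i$ never appear.

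\emph{Sensitivity.} Fix $C = C_j$ and let $X \sim X'$ differ by swapping $x_i$ for $x_i'$. Let $A = \sum_{k \neq i,\, R_k > C} U_k$. The two residual sums are $A + a U_i$ and $A + b U_i'$ with $a, b \in \{0,1\}$.
\begin{itemize}
\item \emph{General vectors.} By the reverse triangle inequality, the change in norm is at most $\|a U_i - b U_i'\|_2 \le 2$.
\item \emph{Nonnegative vectors, $a \neq b$.} Only one unit vector enters or leaves, so the change is at most $1$.
\item \emph{Nonnegative vectors, $a = b = 1$.} For any nonnegative unit vector $U$, we have $0 \le \langle A, U\rangle \le \|A\|_2$. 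Since $\|A + U\|_2^2 = \|A\|_2^2 + 2\langle A, U\rangle + 1$, both $\|A + U_i\|_2$ and $\|A + U_i'\|_2$ lie in $\bigl[\sqrt{\|A\|_2^2 + 1},\, \|A\|_2 + 1\bigr]$. This interval has length at most $1$, so the change is at most $1$.
\end{itemize}
Hence $\rc{\cdot}$ is $1$-sensitive for nonnegative data and $2$-sensitive in general.

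\emph{Privacy.} In the nonnegative case, $\rcalg$ is exactly $\atalg$ with correctly specified sensitivity $\Delta = 1$ and budget $\rhoclip/4$. Lemma~\ref{lem:at_cdp} then gives $4 \cdot \rhoclip/4 = \rhoclip$-CDP.

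In the general case, the noise scales $\Delta/\sqrt{2\rho}$ and $2\Delta/\sqrt{2\rho}$ equal $1/\sqrt{\rhoclip/2}$ and $2/\sqrt{\rhoclip/2}$. The same scales arise from $\atalg$ run with true sensitivity $\Delta' = 2$ and budget $\rho'$ satisfying $2/\sqrt{2\rho'} = 1/\sqrt{\rhoclip/2}$, that is, $\rho' = \rhoclip$. The execution is therefore identical to a correctly parameterized run of $\atalg$, and Lemma~\ref{lem:at_cdp} gives $4\rho' = 4\rhoclip$-CDP.

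The main obstacle is the nonnegative swap case in which both $x_i$ and $x_i'$ exceed $C$. There the naive triangle-inequality bound only gives $2$, and the tighter bound of $1$ needs the nonnegativity of $\langle A, U\rangle$ through the interval argument above.
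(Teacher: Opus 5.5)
Your proposal is correct and follows the paper's route: bound the swap sensitivity of $\rc{C_j}$ by $1$ for nonnegative data and by $2$ in general, then invoke Lemma~\ref{lem:at_cdp}. The paper only asserts these sensitivities and leaves the mismatch between the declared $\Delta=1$ and the true sensitivity $2$ implicit, so your interval argument for the nonnegative case where both swapped points lie in the tail, and your rescaling to $(\Delta',\rho')=(2,\rhoclip)$, fill in details the paper omits rather than change the approach.
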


\subsection{Implementing $\rcalg$}
\label{subsec:rc_imp}
We briefly describe an efficient implementation of $\rcalg$ that uses pre-computation to speed up on-the-fly calculation of $\rc{C_j}$. We focus on the case where the candidate bounds $C_j$ are powers of some base $\beta$, as used by $\uqalgshort$~\cite{D23} and all of our experiments (see Section~\ref{sec:experiments} for additional discussion).

Pseudocode appears in Algorithm~\ref{alg:fast_rc}. The algorithm makes a first pass through the data to compute each data point $x_i$'s norm, unit vector $U_i$,  and the index $b_i$ of the largest clip bound $C_{b_i}$ where it contributes to residual coherence. It then computes the last round $B = \max\{b_i\}$ with nonzero residual coherence and constructs a $(B+1) \times d$ matrix $\texttt{dirs}$ where row $j$ stores directions for points $x_i$ with $b_i = j$. If $j > B$, then $\rc{C_j} = 0$. For the remaining clip bounds, we iterate backward to compute $\rc{C_B}, \rc{C_{B-1}}, \ldots$ using a running sum $\texttt{Tail}$ of each entry of $\texttt{dirs}$.

The result is a complete set of the residual coherence values needed for $\atalg$ in time and space $O(nd + Bd)$, with additional possible savings using sparse matrices and parallelism.  Empirical runtime results for our implementation, which uses Algorithm~\ref{alg:fast_rc} and Scipy's sparse matrix libraries, appear in Section~\ref{subsubsec:time}.

\begin{algorithm}[h]
\caption{Efficient $\rcalg$ Pre-Computation}
\label{alg:fast_rc}
\begin{algorithmic}[1]
\Require Dataset $X \in \mathbb{R}^{n \times d}$, geometric base $\beta$
\For{nonzero data points $x_i$}
    \State $R_i \gets \|x_i\|_2$
    \State $U_i \leftarrow x_i / R_i$
    \If{$R_i \leq 1$}
        \State $b_i \leftarrow -1$ \Comment{-1 handles points which contribute to no $\rc{\beta^j}$}
    \Else
        \State $b_i \gets \lfloor \log_\beta(R_i) \rfloor$
    \EndIf
\EndFor
\State $B \leftarrow \max_{i \in [n]} b_i$
\State Initialize array \texttt{dirs} of size $(B + 1) \times d$ to 0
\For{$i \in [n]$}
    \If{$b_i \geq 0$}
        \State $\text{\texttt{dirs}}[b_i] \leftarrow \text{\texttt{dirs}}[b_i] + U_i$
    \EndIf
\EndFor
\State $\texttt{Tail} \leftarrow \mathbf{0} \in \mathbb{R}^d$
\State Initialize array $\texttt{RC}$ of size $B+1$
\For{$k = B, B-1, \dots, 0$}
    \State $\texttt{Tail} \leftarrow \texttt{Tail} + \texttt{dirs}[k]$
    \State $\texttt{RC}[k] \leftarrow \|\texttt{Tail}\|_2$
\EndFor
\State \Return $\texttt{RC}$
\end{algorithmic}
\end{algorithm}

\subsection{Residual Coherence and the Optimal Clipping Bound}
\label{subsec:rc_proxy}
This subsection proves that residual coherence at the optimal clipping bound is near our chosen threshold $T$. The result (Theorem~\ref{thm:rc}) relies on Assumption~\ref{assm:tail_uncorrelated}, which precludes any clean linear relationship between how deep a vector lies in the tail and the direction that it points in. This can be viewed as a formalization of the rough idea that outliers tend to vary in different ways.
\begin{assumption}
\label{assm:tail_uncorrelated}
    Given MSE-optimal clipping bound $C^*$, a tail vector's excess magnitude is uncorrelated with its direction:
\begin{equation}
\label{eq:uncorrelated_assumption}
    \E{\calD}{(R_i-C^*)U_i \mid R_i > C^*} = \E{\calD}{R_i-C^* \mid R_i > C^*} \cdot \E{\calD}{U_i \mid R_i > C^*}.
\end{equation}
\end{assumption}
Assumption~\ref{assm:tail_uncorrelated} also rules out a pathological bad instance for $\rcalg$ in which, for example, half of the data consists of small vectors in a shared direction, the other half of the data consists of much larger vectors in a shared opposite direction, and $\rcalg$ stops prematurely because the unit vectors cancel at $C = \beta^0 = 1$. With Assumption~\ref{assm:tail_uncorrelated}, expected residual coherence is near $T = \sqrt{2d/\rhosum}$ at the optimal $C^*$.
\begin{theorem}
\label{thm:rc}
    Let $X=\{x_{1},\dots,x_{n}\} \subset \mathbb{R}^d$ be a dataset drawn i.i.d.\ from a distribution $\calD$ satisfying Assumption~\ref{assm:tail_uncorrelated}. Let $C^*$ denote the MSE-optimal bound, let $p_{C^*} = \P{\calD}{R_i > C^*}$, and define $\gamma$ by $\E{\calD}{R_i - C^* \mid R_i > C^*} = \gamma C^*$. Then $\E{\calD^n}{\rc{C^*}^2} = \frac{2d}{\gamma\rhosum}$, and
    \begin{equation*}
        \sqrt{\frac{n}{n-1}\left(\frac{2d}{\gamma\rhosum} - n p_{C^*}\right)} \le \E{\calD^n}{\rc{C^*}} \le \sqrt{\frac{2d}{\gamma\rhosum}}.
    \end{equation*}
\end{theorem}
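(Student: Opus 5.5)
The plan is to read ``MSE-optimal'' as the first-order condition in expectation over $\calD^n$: taking expectations of Equation~\ref{eq:optimal_C} at $C = C^*$ gives
\[
\E{\calD^n}{\sum_{i,j} \mathbb{1}[R_i > C^*]\,\mathbb{1}[R_j > C^*]\,(R_i - C^*)\langle U_i, U_j\rangle} = \frac{2C^*d}{\rhosum}.
\]
I would then compute both this bias-derivative expectation and $\E{\calD^n}{\rc{C^*}^2}$ explicitly. The goal is to show that they differ exactly by the factor $\gamma C^*$.

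Write $p = p_{C^*}$ and $\mu = \E{\calD}{U_i \mid R_i > C^*}$. First expand $\rc{C^*}^2 = \sum_{i,j}\mathbb{1}_i\mathbb{1}_j\langle U_i,U_j\rangle$ into diagonal and off-diagonal parts.
\begin{itemize}
\item The diagonal terms contribute $np$, since $\|U_i\|=1$.
\item For $i \neq j$, independence gives $\E{}{\mathbb{1}_i\mathbb{1}_j\langle U_i,U_j\rangle} = p^2\|\mu\|^2$.
\end{itemize}
Hence $\E{}{\rc{C^*}^2} = np + n(n-1)p^2\|\mu\|^2$.

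Now split the bias-derivative sum the same way.
\begin{itemize}
\item The diagonal terms give $n\,\E{}{(R_i-C^*)\mathbb{1}_i} = np\gamma C^*$, by the definition of $\gamma$.
\item For $i\neq j$, independence factors the expectation as $\langle \E{}{(R_i-C^*)U_i\mathbb{1}_i}, \E{}{U_j\mathbb{1}_j}\rangle$. Assumption~\ref{assm:tail_uncorrelated} turns the first factor into $p\gamma C^*\mu$, so each such term equals $p^2\gamma C^*\|\mu\|^2$.
\end{itemize}
Thus the bias derivative has expectation $\gamma C^*\,\E{}{\rc{C^*}^2}$. Equating this with $2C^*d/\rhosum$ yields $\E{}{\rc{C^*}^2} = \frac{2d}{\gamma\rhosum}$.

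For the bounds, the upper bound is Jensen's inequality: $\E{}{\rc{C^*}} \le \sqrt{\E{}{\rc{C^*}^2}}$. For the lower bound, apply Jensen to the convex norm:
\[
\E{}{\rc{C^*}} = \E{}{\Big\|\sum_i \mathbb{1}_iU_i\Big\|_2} \ge \Big\|\E{}{\sum_i\mathbb{1}_iU_i}\Big\|_2 = np\|\mu\|.
\]
Then rewrite the right-hand side using the second-moment identity:
\[
n^2p^2\|\mu\|^2 = \frac{n}{n-1}\, n(n-1)p^2\|\mu\|^2 = \frac{n}{n-1}\left(\frac{2d}{\gamma\rhosum} - np\right).
\]
Taking square roots gives the stated lower bound, which holds trivially when the quantity inside the root is negative.

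The main obstacle is the interpretive step of fixing what ``MSE-optimal'' means. We need the stationarity condition of Equation~\ref{eq:optimal_C} to hold in expectation over $\calD^n$, with $C^*$ fixed and not data-dependent. This matters because independence across $i \neq j$ and Assumption~\ref{assm:tail_uncorrelated} are only usable for a deterministic $C^*$. It also relies on the footnoted piecewise differentiability: the derivative is taken where it exists, with the nonnegativity-based sign argument, or its general-vector analogue, ensuring the stationary point is the minimizer. Once this is set up, the remaining steps are routine expectation algebra.
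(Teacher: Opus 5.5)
Your proposal is correct and follows the paper's proof essentially step for step. You use the same diagonal/off-diagonal split (diagonal via the definition of $\gamma$, off-diagonal via independence and Assumption~\ref{assm:tail_uncorrelated}) to show the expected bias derivative equals $\gamma C^*\,\E{\calD^n}{\rc{C^*}^2}$, the same step of equating that expectation to $2C^*d/\rhosum$ for a fixed $C^*$, and the same two applications of Jensen's inequality.

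One small remark: by your own identity, the quantity under the lower-bound square root equals $n^2p^2\|\mu\|_2^2 \ge 0$, so the ``negative'' case you set aside never arises.
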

\begin{proof}
    Define $Y_i = \mathbb{I}(R_i > C^*)U_i$ and $Z_i = \mathbb{I}(R_i > C^*)(R_i - C^*)U_i$, so that $\rc{C^*} = \left\|\sum_{i=1}^n Y_i\right\|_2$ and the bias derivative at $C^*$ is $\sum_{i=1}^n \sum_{j=1}^n \langle Z_i, Y_j \rangle$. By linearity of expectation over $[n]^2$,
    \begin{equation}
    \label{eq:thm_1}
        \E{\calD^n}{\sum_{R_i > C^*} \sum_{R_j > C^*} (R_i - C^*) \langle U_i, U_j \rangle} = \sum_{i=1}^n \sum_{j=1}^n \E{\calD^n}{\langle Z_i, Y_j \rangle}.
    \end{equation}
    For the diagonal terms ($i=j$), since $\langle U_i, U_i \rangle = 1$, the definition of $\gamma$ yields
    \begin{align*}
        \E{\calD}{\langle Z_i, Y_i \rangle} =&\  \E{\calD}{\mathbb{I}(R_i > C^*)(R_i - C^*)} \\
        =&\ p_{C^*} \E{\calD}{R_i - C^* \mid R_i > C^*} \\
        =&\ \gamma C^* p_{C^*} \\
        =&\ \gamma C^* \E{\calD}{\|Y_i\|_2^2}.
    \end{align*}
    For the off-diagonal terms ($i \neq j$), we apply Assumption~\ref{assm:tail_uncorrelated} and get
    \begin{align*}
        \E{\calD}{Z_i} &= p_{C^*}\E{\calD}{(R_i - C^*)U_i \mid R_i > C^*} \\
        &= p_{C^*}\E{\calD}{R_i - C^* \mid R_i > C^*} \cdot \E{\calD}{U_i \mid R_i > C^*} \\
        &= \gamma C^* p_{C^*} \E{\calD}{U_i \mid R_i > C^*} \\
        &= \gamma C^*\E{\calD}{Y_i}.
    \end{align*}
    Then by independence of $x_i$ and $x_j$ for $i \neq j$,
    \begin{align*}
        \E{\calD^2}{\langle Z_i, Y_j \rangle} &= \langle \E{\calD}{Z_i}, \E{\calD}{Y_j} \rangle \\
        &= \gamma C^* \langle \E{\calD}{Y_i}, \E{\calD}{Y_j} \rangle \\
        &= \gamma C^* \E{\calD^2}{\langle Y_i, Y_j \rangle}.
    \end{align*}
    We return to Equation~\ref{eq:thm_1} and get 
    \begin{equation}
        \sum_{i=1}^n \sum_{j=1}^n \E{\calD^n}{\langle Z_i, Y_j \rangle} = \gamma C^* \sum_{i=1}^n \sum_{j=1}^n \E{\calD^n}{\langle Y_i, Y_j \rangle} = \gamma C^* \cdot \E{\calD^n}{\rc{C^*}^2}. \label{eq:bias_derivative_rc}
    \end{equation}
    Setting the expected bias derivative to $\frac{2C^*d}{\rhosum}$ in Equation~\ref{eq:bias_derivative_rc} gives $\E{\calD^n}{\rc{C^*}^2} = \frac{2d}{\gamma\rhosum}$, and by Jensen's inequality
    \begin{equation*}
        \E{\calD^n}{\rc{C^*}} \le \sqrt{\E{\calD^n}{\rc{C^*}^2}} = \sqrt{\frac{2d}{\gamma\rhosum}}.
    \end{equation*}
    
    For the lower bound, let $\mu = \E{\calD}{Y_i}$. Since $Y_1, \dots, Y_n$ are i.i.d., we have $\E{\calD}{\|Y_i\|_2^2} = p_{C^*}$, $\E{\calD^2}{\langle Y_i, Y_j \rangle} = \langle \E{\calD}{Y_i}, \E{\calD}{Y_j} \rangle = \|\mu\|_2^2$ for all $i \neq j$, and $\left\|\E{\calD^n}{\sum_{i=1}^n Y_i}\right\|_2^2 = n^2\|\mu\|_2^2$. Expanding $\E{\calD^n}{\rc{C^*}^2}$ therefore gives
    \begin{align*}
        \E{\calD^n}{\rc{C^*}^2} &= \sum_{i=1}^n \E{\calD}{\|Y_i\|_2^2} + \sum_{i \neq j} \E{\calD^2}{\langle Y_i, Y_j \rangle} \\
        &= n p_{C^*} + n(n-1)\|\mu\|_2^2 \\
        &= n p_{C^*} + \frac{n-1}{n}\left\|\E{\calD^n}{\sum_{i=1}^n Y_i}\right\|_2^2.
    \end{align*}
    Rearranging, substituting in $\E{\calD^n}{\rc{C^*}^2} = \frac{2d}{\gamma\rhosum}$, and applying Jensen's inequality again yields
    \begin{align*}
        \E{\calD^n}{\rc{C^*}} &= \E{\calD^n}{\left\|\sum_{i=1}^n Y_i\right\|_2} \\
        &\geq \left\|\E{\calD^n}{\sum_{i=1}^n Y_i}\right\|_2 \\
        &= \sqrt{\frac{n}{n-1}\left(\frac{2d}{\gamma\rhosum} - n p_{C^*}\right)}.
    \end{align*}
\end{proof}

We note that Assumption~\ref{assm:tail_uncorrelated} is sufficient but perhaps not necessary. The next section demonstrates that $\rcalg$ performs well across a variety of datasets. No dataset was chosen to satisfy Assumption~\ref{assm:tail_uncorrelated}, and most only approximately satisfy it.

%% file: sections/experiments.tex
\section{Experiments}
\label{sec:experiments}
This section evaluates the empirical performance of $\rcalg$ on real-world datasets. For each dataset, we compare $\rcalg$ against baselines across a range of $\rhoclip$, keeping $\rhosum=0.1$ fixed. In each experiment, we compute a $\rhoclip$-CDP clipping bound $\tilde C$, clip the vectors to $\tilde C$, sum them, apply $\rhosum$-CDP spherical Gaussian noise scaled to $2\tilde C$, and finally compute the MSE $\frac{1}{d}\sum_{j=1}^d(\tilde y_j - y_j)^2$ between the noisy vector sum $\tilde y$ and the true (unclipped) vector sum $y$. $\rcalg$ uses the same candidate clip bounds $\{1.001^{j-1}\}_{j=1}^\infty$ throughout, and these bounds were taken from~\citet{D23} before running any experiments.\footnote{These bounds implicitly assume $C^* \geq 1$. If an end-user wishes to account for the possibility that $C^* \ll 1$, a simple solution is to additionally run $\rcalg$ on a sequence of shrinking $C_j$ until residual coherence \emph{exceeds} the threshold.} Complete experiment code may be found on Github~\cite{G26B}.

\subsection{Baselines}
\label{sec:baselines}
We evaluate $\rcalg$ against two baselines. The first is the $\uqalgshort$ algorithm introduced by~\citet{D23}, which uses $\atalg$ to privately estimate a target quantile of the vectors' $\ell_2$ norms. We use the same candidate clip bounds $\{1.001^{j-1}\}_{j=1}^\infty$ as $\rcalg$, again noting that the choice of 1.001 comes from~\citet{D23}. To evaluate the role that the quantile choice plays in its performance, we run it for each target quantile in $\{0.9, 0.95, 0.99\}$, noting that~\citet{D23} uses 0.99 as the single fixed choice in their experiments.\footnote{We also evaluated quantile 0.995, but it was never the best quantile choice, so we omit it.} 

The second baseline is a similar quantile-based algorithm due to~\citet{HLY21}, denoted $\hlyalg$. It privately estimates the $k^{th}$-highest $\ell_2$ norm using $k = \sqrt{\frac{2d}{\rhosum}}$. Their justification is similar to the MSE decomposition presented in Section~\ref{subsec:alg_desc}, with the simplifying (but loose and geometry-agnostic) application of $\|\sum_{R_i > C} (R_i-C)U_i\|_2 \leq \sum_{R_i > C}(R_i-C)$. Our implementation replaces their binary search-based quantile subroutine with Durfee's $\uqalgshort$ algorithm, as this consistently and significantly reduced empirical error. We also omit their pre-quantile shifting step for the same reason.\footnote{Their shifting step involves a division of the privacy budget across $d$. This may account for its relative success on their datasets ($d \ll 1000$) and failure on most of ours ($d \gg 1000$).}

Finally, we also evaluated $\abalg$~\cite{WZLDS+20}, $\bdkualg$~\cite{BDKU20}, and $\ksualg$~\cite{KSU20}. However, they never beat the above baselines and are omitted from our experiments. A longer discussion of these omitted methods appears in Appendix~\ref{subsec:omitted_methods}.

\subsection{Datasets}
\label{subsec:datasets}
Our experiments use ten real-world datasets (and, in the Appendix, an additional four light-tail datasets for completeness). For each dataset, we record the descriptive 4-tuple (dimensions $n \times d$, row, column, entry) below. Note that the first eight datasets only have nonnegative data, while the last two also have negative data. We set $\rhoclip$ appropriately according to Lemma~\ref{lem:rc_cdp}.
\begin{enumerate}
    \item Steam~\cite{T16}: (11350 x 3600, Steam user, game, \# hours played)
    \item Reddit~\cite{GGKSS+20}: (223388 x 153701, Reddit user, token, \# times token used in comments)
    \item LastFM~\cite{C10}: (10237 x 49602, LastFM user, music artist, \# plays), first 500,000 rows
    \item HN~\cite{N26}: (43679 x 18884, Hacker News user, story, comment or not), March 2026
    \item MovieLens~\cite{HK15}: (6022 x 3043, MovieLens user, movie, whether or not user rated movie)
    \item Pantry~\cite{NLM19}: (247659 x 10814, Pantry user, item, whether or not user rated item)
    \item Twitch~\cite{RMA21}: (100000 x 739991, Twitch user, stream, whether or not user watched stream)
    \item Foursquare~\cite{YZZY14}: (2293 x 61858, Foursquare Tokyo user, venue, \# visits)
    \item Jester~\cite{GRGP01}: (24983 x 100, Jester user, joke, rating in $[-10, 10]$)
    \item Slashdot~\cite{LHK10}: (44044 x 70284, Slashdot user, rated user, $\pm 1$ tag)
\end{enumerate}
We chose these datasets to represent a variety of contexts. Prior information consisted only of a dataset's brief qualitative description. Most datasets fit under the broad umbrella of ``internet data''. This is in part because we view such data as an important application for differential privacy, and in part because user-keyed internet data is most readily publicly available. Nonetheless, we suggest that the data contexts are heterogeneous enough to plausibly extend to other individual data settings.

As an aside, we note that a fully private histogram sometimes requires an initial private key selection stage to even generate the histogram domain. The domain in the Reddit dataset, for example, is a function of user data, as a user can introduce a token. We view private key selection as orthogonal to the private sum problem considered here and omit it from our pipeline, essentially treating the column headers of each dataset as public information. A closer investigation of the relationship between key selection and clipping bound selection may be an interesting direction for future work.

Finally, for completeness we also evaluated the methods on four datasets whose qualitative descriptions suggested light tails. As described in the introduction, most methods generally did well on these datasets, so we defer the plots to Appendix~\ref{subsec:light_tails}. Here, we simply note that our conceptual tailoring of $\rcalg$ to heavy-tailed data does not lead to poor performance on light-tailed data. Instead, $\rcalg$ matches the best (low) error of the baselines.

\subsection{Results}
\subsubsection{Error}
\label{subsubsec:error}
MSE results by dataset, method, and $\rhoclip$ value appear in Figure~\ref{fig:plots} and are summarized in Table~\ref{tab:overall}. The primary metric is MSE ratio with respect to $\optalg$, which non-privately computes the MSE-optimal $C^*$. $\rcalg$ achieves the strongest overall empirical performance. Its mean MSE ratio at $\rhoclip = 0.05$\footnote{Relative performance does not vary much across $\rhoclip$ (see Figure~\ref{fig:plots}). $\rhoclip = 0.05$ is a representative point.} is best on 6/10 datasets, second-best on the remaining 4/10, and suboptimal by at most a factor of 1.32 across datasets. In contrast, every other baseline is suboptimal by at least a factor of 2 (and often much more) on at least half the datasets.

\begin{table}[h]
    \centering
    \small
    \setlength{\tabcolsep}{0pt}
    \begin{tabular*}{\textwidth}{@{\extracolsep{\fill}}lcccccccccc@{}}
        \toprule
        \textbf{Dataset} & Steam & Reddit & LastFM & HN & MovieLens & Pantry & Twitch & Foursquare & Jester & Slashdot \\
        \midrule
        $\rcalg$ & 1.02 & 1.03 & \textbf{1.32} & 1.04 & 1.12 & \textbf{1.05} & \textbf{1.22} & \textbf{1.10} & \textbf{1.05} & \textbf{1.04} \\ 
        $\hlyalg$ & 1.23 & 1.08 & 5.45 & 3.97 & 1.43 & 2.94 & 8.40 & 4.31 & 1.27 & 6.93 \\
        $\uqalgshort$, 0.9 & 1.64 & 6.26 & 5.31 & \textbf{1.01} & \textbf{1.05} & 1.58 & 5.67 & 35.65 & 3.24 & 2.08 \\
        $\uqalgshort$, 0.95 & \textbf{1.01} & 2.76 & 11.15 & 1.41 & 1.32 & 1.06 & 7.74 & 67.95 & 1.22 & 3.96 \\
        $\uqalgshort$, 0.99 & 2.33 & \textbf{1.01} & 41.40 & 5.06 & 2.06 & 1.41 & 11.78 & 136.66 & 1.10 & 13.87 \\
        \midrule
        $\mathbf{\hat F(C^*)}$ & 0.96 & 0.99 & 0.51 & 0.89 & 0.86 & 0.97 & 0.08 & $\approx$ 0.00 & 0.98 & 0.65 \\ 
        \bottomrule
    \end{tabular*}
    \vspace{3pt}
    \caption{First five rows: mean error ratio (relative to optimal clipping) across 100 trials at $\rhoclip = 0.05$ for each method (lower is better), with the lowest error ratio bolded for each dataset. Bottom row: empirical CDF at the optimal clipping bound $C^*$ across datasets.}
    \label{tab:overall}
\end{table}

Figure~\ref{fig:plots} shows that some instances of $\uqalgshort$ can achieve lower average error than $\rcalg$ on some datasets. However, the empirical CDF at the optimal clipping bound $C^*$ varies widely across datasets, from $<0.01$ to $0.99$ (see last row of Table~\ref{tab:overall}). Consequently, even a non-private fixed quantile cannot perform well in general. In contrast, $\rcalg$ incorporates the underlying data geometry to achieve low error across a range of datasets.

\begin{figure}
  \centering
  \includegraphics[scale=0.46]{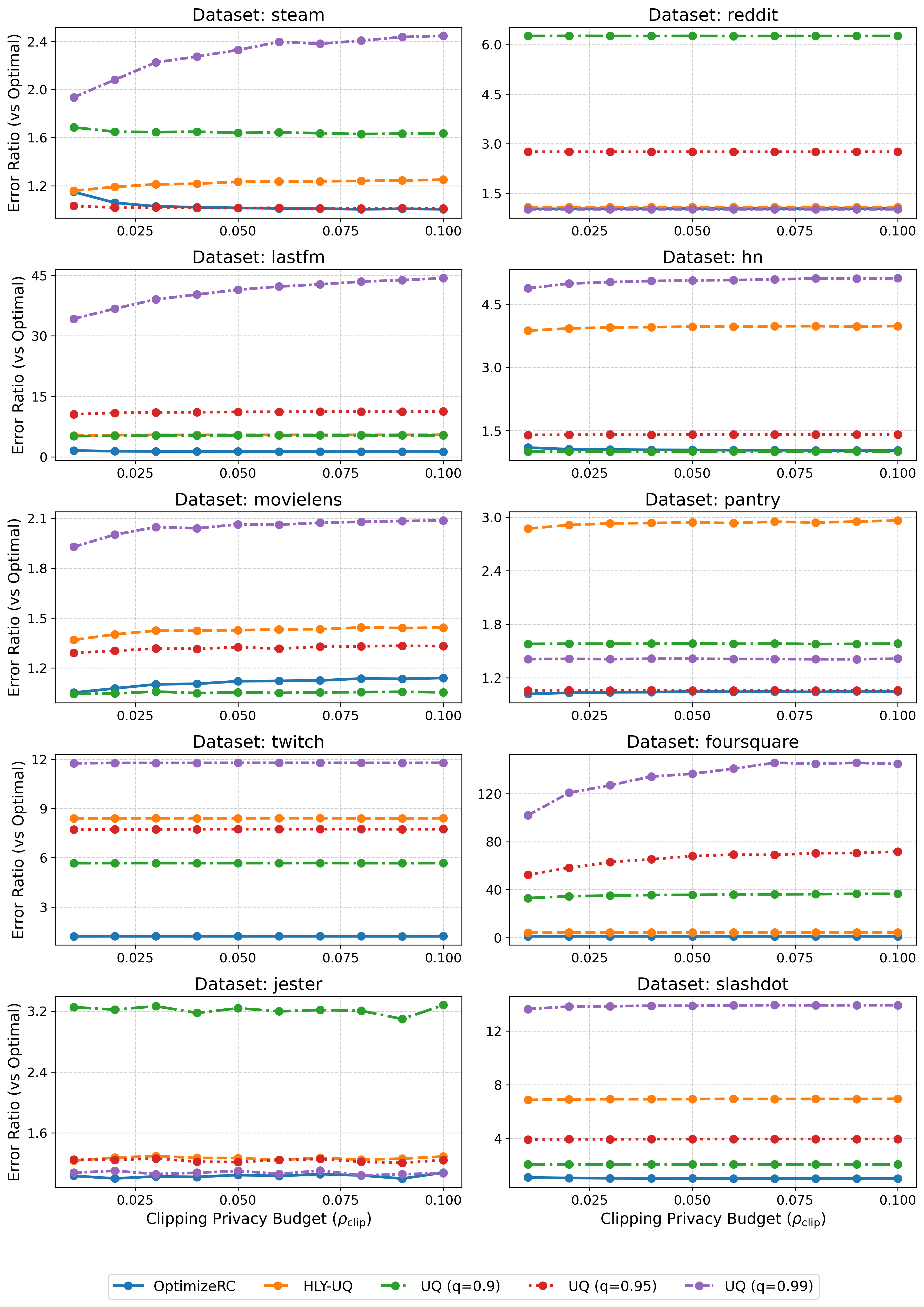}
  \caption{Plots of $\rcalg$'s performance against $\uqalgshort$ ($q = \{0.9, 0.95, 0.99\}$) and $\hlyalg$. Each line is a different method, and each point on the line represents the mean of 100 trials per $\rhoclip \in \{0.01, 0.02, \ldots, 0.1\}$. We plot ratios of the mean MSE with respect to the (non-private) MSE-optimal $\optalg$, so a lower value is better.}
  \label{fig:plots}
\end{figure}

Additional plots in Appendix~\ref{subsec:more_plots} (Figure~\ref{fig:rc_plots}) record the relationship between the residual coherence of successive clipping bounds, the threshold $\sqrt{2d / \rhosum}$, and the optimal clipping bound $C^*$. We find that residual coherence generally intersects the threshold relatively close to $C^*$, and even when it errs, it does not err as catastrophically as the quantile methods. Perhaps surprisingly, we also find that residual coherence is monotonically decreasing in $C$ even on general data\footnote{It is always monotonically decreasing on nonnegative data.}, as the chosen datasets lack the strongly opposing vectors that would be required to make residual coherence non-monotonic.

Plots of $\tilde C / C^*$ for different algorithms also appear in Appendix~\ref{subsec:more_plots} (Figure~\ref{fig:clip_plots}). We find that $\hlyalg$ consistently overestimates the optimal $C^*$, and there is no (heavy-tailed dataset, $\rhoclip$) pair out of 100 on which $\hlyalg$ obtains lower error than $\rcalg$. This follows from $\hlyalg$'s simplifying bound $\|\sum_{R_i > C} (R_i-C)U_i\|_2 \leq \sum_{R_i > C}(R_i-C)$ (see Section~\ref{sec:baselines}), which consistently leads to overestimates relative to $\rcalg$.

\subsubsection{Testing Assumption 1}
\label{subsubsec:testing_assumption}
Our utility result (Theorem~\ref{thm:rc}) assumes that the underlying data distribution satisfies Assumption~\ref{assm:tail_uncorrelated}. The following experiment tests this assumption by evaluating the normalized $\ell_2$ distance between the LHS and RHS of Equation~\ref{eq:uncorrelated_assumption} at $C^*$ across datasets.

\begin{table}[h]
    \centering
    \small
    \setlength{\tabcolsep}{0pt}
    \begin{tabular*}{\textwidth}{@{\extracolsep{\fill}}lcccccccccc@{}}
        \toprule
        \textbf{Dataset} & Steam & Reddit & LastFM & HN & MovieLens & Pantry & Twitch & Foursquare & Jester & Slashdot \\
        \midrule
        $\frac{\|\mathbf{L} - \mathbf{R}\|_2}{\|\mathbf{L}\|_2}$ & 0.10 & 0.04 & 0.27 & 0.27 & 0.13 & 0.29 & 0.27 & 0.18 & 0.14 & 0.38 \\
        \bottomrule
    \end{tabular*}
    \vspace{3pt}
    \caption{Normalized $\ell_2$ distance between $\mathbf{L} = \hat{\mathbb{E}}[(R_i - C^*)U_i \mid R_i > C^*]$ and $\mathbf{R} = \hat{\mathbb{E}}[R_i - C^* \mid R_i > C^*]\hat{\mathbb{E}}[U_i \mid R_i > C^*]$ (LHS and RHS in Assumption~\ref{assm:tail_uncorrelated}). Values far from 0 correspond to worse assumption violations.}
    \label{tab:assumption}
\end{table}

As shown in Table~\ref{tab:assumption}, Assumption~\ref{assm:tail_uncorrelated} is only approximately satisfied on most of our datasets, suggesting that Theorem~\ref{thm:rc} is conservative: weaker properties may govern $\rcalg$'s practical performance.

\subsubsection{Time}
\label{subsubsec:time}
Runtimes for evaluations on a typical personal computer appear in Table~\ref{tab:time}, where the pre-computation for a given $\rhoclip$ is amortized over 10 trials of $\atalg$. $\rcalg$ is typically slower than $\uqalgshort$ and $\hlyalg$ by approximately an order of magnitude, but it is still fast in absolute terms, consistently running in $\leq$ 1s even when the number and length of vectors are in the hundreds of thousands. This speedup is a consequence of the fast algorithm described in Section~\ref{subsec:rc_imp} and its implementation using SciPy's sparse matrices. We suggest that the mild increase in absolute runtime is an acceptable trade for $\rcalg$'s strong error improvements.

\begin{table}[h]
    \centering
    \small
    \setlength{\tabcolsep}{0pt}
    \begin{tabular*}{\textwidth}{@{\extracolsep{\fill}}lcccccccccc@{}}
        \toprule
        \textbf{Algorithm} & Steam & Reddit & LastFM & HN & MovieLens & Pantry & Twitch & Foursquare & Jester & Slashdot \\
        \midrule
        $\rcalg$ & 0.14 & 0.7 & 0.4 & 0.1 & 0.06 & 0.05 & 1 & 0.2 & 0.07 & 0.13 \\
        $\uqalgshort$, 0.9 & 0.01 & 0.03 & 0.02 & 0.007 & 0.006 & 0.02 & 0.02 & 0.008 & 0.01 & 0.007 \\
        $\uqalgshort$, 0.95 & 0.01 & 0.03 & 0.02 & 0.007 & 0.006 & 0.02 & 0.02 & 0.009 & 0.01 & 0.007 \\
        $\uqalgshort$, 0.99 & 0.02 & 0.03 & 0.02 & 0.009 & 0.006 & 0.02 & 0.02 & 0.01 & 0.01 & 0.009 \\
        $\hlyalg$ & 0.01 & 0.03 & 0.02 & 0.009 & 0.006 & 0.02 & 0.02 & 0.006 & 0.01 & 0.008 \\
        \bottomrule
    \end{tabular*}
    \vspace{3pt}
    \caption{Average runtime (in seconds) of bound estimation algorithms.}
    \label{tab:time}
\end{table}

%% file: sections/discussion.tex
\section{Discussion}
\label{sec:discussion}
This paper proposes a method, $\rcalg$,  for privately choosing a clipping bound for summing $d$-dimensional vectors. Incorporating the dataset's geometry makes it possible to privately select clipping bounds that yield lower MSE than those produced by the comparable baselines in the literature. We note that the small number of algorithmic choices required from the operator, such as the candidate clipping bound schedule $C_1, C_2, \ldots$, are fixed to the same untuned quantities across our experiments. We present this as evidence that it is empirically possible to privately choose good sum clipping bounds with no input from the algorithm operator.

We close with a few ways $\rcalg$ might be improved. First, while Theorem~\ref{thm:rc} establishes sufficient conditions under which residual coherence intersects the desired threshold at the optimal $C^*$, it does not provide necessary conditions (as shown in Section~\ref{subsubsec:testing_assumption}) or prove that the output $\tilde C$ is near-optimal. Second, to the best of our knowledge, little is known about residual coherence. The material in Section~\ref{subsec:rc_proxy} presents some evidence that it is a plausible low-sensitivity proxy for the bias derivative, but better identification of the data regimes in which it performs well -- or better low-sensitivity alternatives -- may be an interesting avenue for future work. Finally, we have not attempted to apply $\rcalg$ to one of the main venues for private sums: gradient clipping in  DP-SGD. Briefly, this is because improving DP-SGD does not cleanly reduce to minimizing the MSEs of gradient sums. As (non-DP) SGD demonstrates, clipping can improve convergence stability, and the appropriate degree of clipping is not well understood even absent privacy. Nonetheless, exploring $\rcalg$ or similar clipping bound selection approaches in DP-SGD may be useful.

%% file: sections/ai.tex
\section{AI Use Disclosure}
We used various versions of Gemini for experiment code and minor math assistance in the proof of Theorem~\ref{thm:rc}. All prose is human-written and the logic of all AI code and math have been human-checked. We take full responsibility for the correctness of this work.

%% file: sections/appendix.tex
\section{Omitted Methods}
\label{subsec:omitted_methods}
This section discusses three omitted alternative methods for private clipping bound selection: $\abalg$~\cite{WZLDS+20}, $\bdkualg$~\cite{BDKU20}, and $\ksualg$~\cite{KSU20}. At a high level, the three methods pursue different algorithmic strategies, but they all include a step that deliberately chooses a clipping bound large enough to leave most points unclipped. This introduces the same problems as any method that privately estimates a large fixed quantile (see Section~\ref{subsubsec:error}) and is in each case compounded by other algorithmic details. Explanations for each method follow.

$\abalg$ computes, e.g., private sums by constructing a private histogram over the data's $\ell_2$ norms, computes a threshold count $T$ unlikely to be crossed by any truly empty bin, taking the largest bin with count at least $T$ as the clipping bound, and computing a standard clipped private sum. In addition to the aforementioned problems in taking a fixed quantile as the clipping bound, this histogram approach is additionally a less accurate private quantile primitive than $\uqalg$'s $\atalg$, so it is empirically dominated by $\uqalg$.

$\bdkualg$ computes a private mean by projecting the data onto some ball containing the true mean, estimating a simple private clipped mean using additive noise, and repeating this process with geometrically shrinking balls. The clipping bound chosen for each clipped mean is set using Gaussian tail bounds calibrated to probably contain all data points. This does not yield an obvious analogue when the data is heavy-tailed, and the general clipping strategy is effectively attempting to use the maximum norm as a clipping bound. Finally, we note that the experiments of~\citet{HLY21} generally show $\hlyalg$ outperforming $\bdkualg$ outside of well-specified Gaussian data.

$\ksualg$ is designed for heavy-tailed private mean estimation. It constructs a private histogram over each dimension to obtain an initial coordinate-wise mean estimate, truncates data to a ball around this estimate, and finally computes a standard private clipped mean  using the ball. However, the initial mean estimate requires an input interval $[-R,R]$ containing the true mean and a bound on the data distribution's $k^{th}$ moment and further requires $n \gg d$ samples for the result to be accurate. It is not clear how to set either parameter in general, and only 4/10 of our primary datasets have $n \geq 2d$. Furthermore, the second estimation step chooses a truncated ball radius based on worst-case concentration bounds for distributions with bounded $k^{th}$ moment, calibrated to contain nearly all data points. This again inherits the aforementioned problems of clipping to accommodate the largest data points.

Finally, we note that evaluations of the above methods on our datasets never beat the three methods discussed in the text, though the number of user-specified parameters in $\bdkualg$ and $\ksualg$ make a standardized untuned comparison difficult.

\newpage 
\section{Additional Heavy-Tail Plots}
\label{subsec:more_plots}
\enlargethispage{1in}

\begin{figure}[!ht]
  \centering
  \includegraphics[scale=0.47]{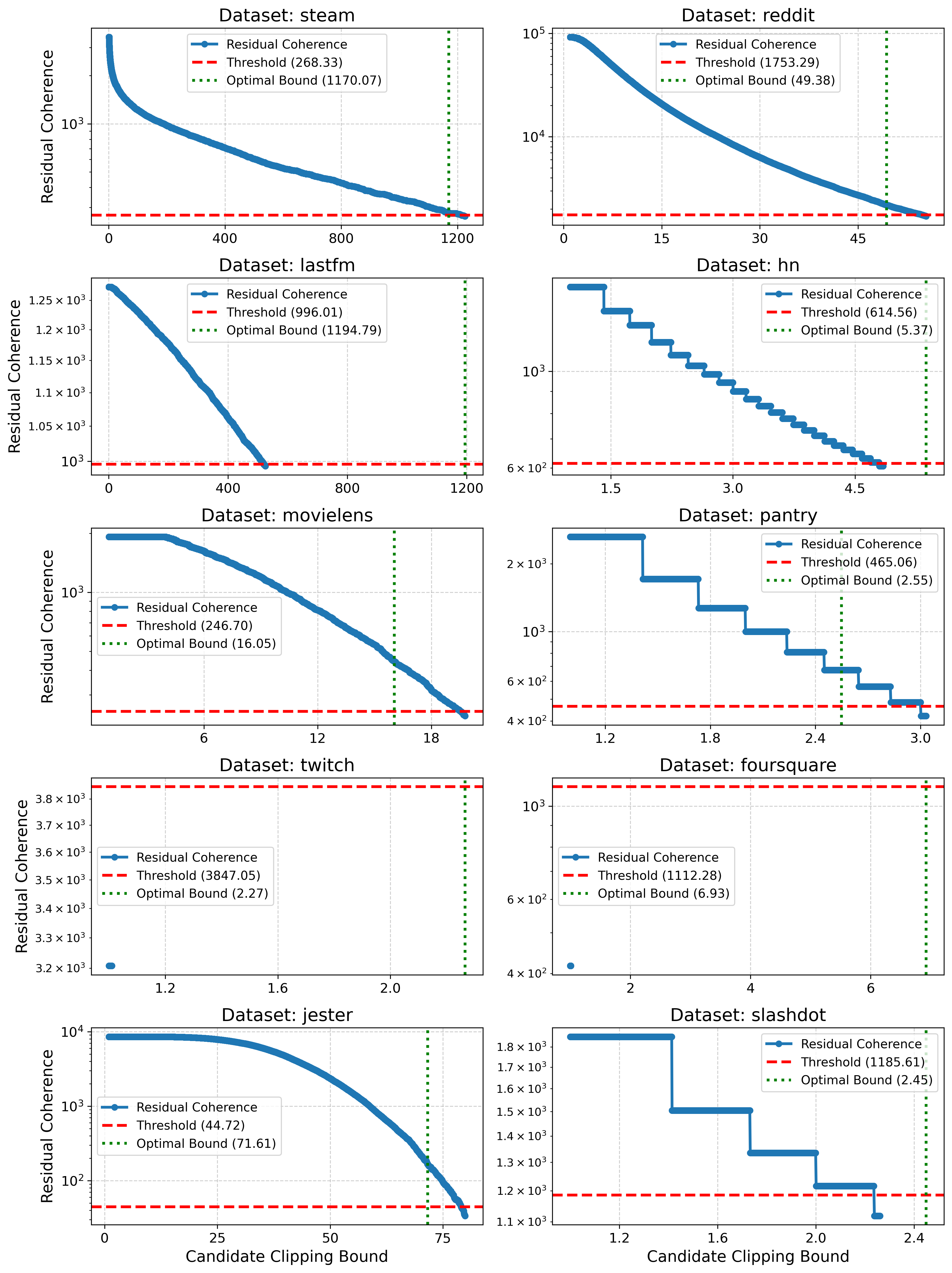}
  \caption{\small Plots of residual coherence vs. clipping bounds across all evaluated datasets. The horizontal line is $-T$ in $\rcalg$. Note that HN, Pantry, and Slashdot have a distinctive ``staircase'' shape because the small clipping bound range accentuates the underlying integer values. For the Twitch and Foursquare datasets, residual coherence is immediately below the threshold, and $\rcalg$ outputs 1. This aggressively clipped bound nonetheless beats the $\uqalg$ methods, which select bounds that are too large (see Figure~\ref{fig:clip_plots}).}
  \label{fig:rc_plots}
\end{figure}

\begin{figure}[h]
  \centering
  \includegraphics[scale=0.47]{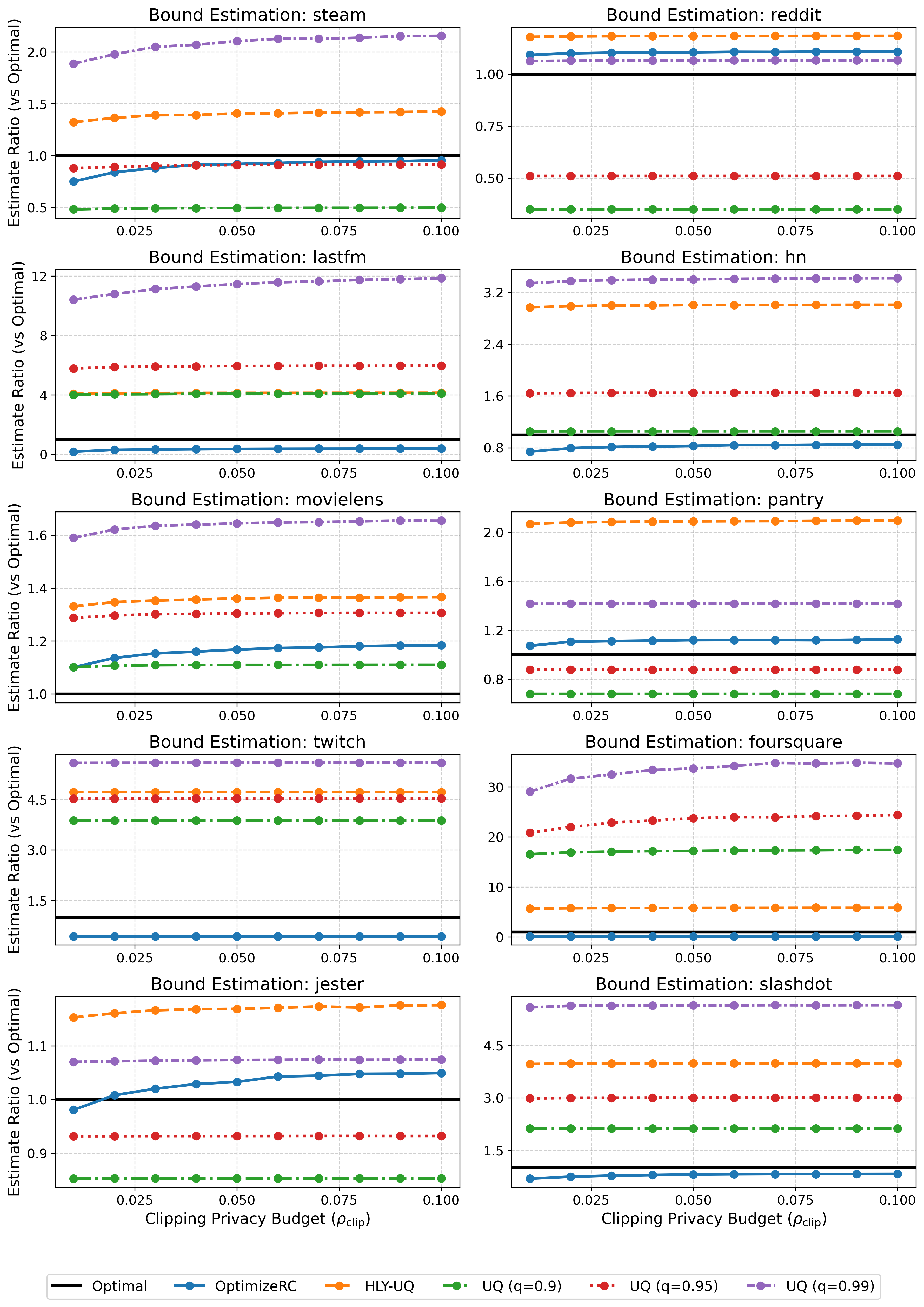}
  \caption{Plots illustrating the ratio between the estimated clipping bound and the optimal $C^*$. A value $<1$ means the estimated bound is smaller, and a value $>1$ means it is larger.}
  \label{fig:clip_plots}
\end{figure}

\clearpage

\section{Light-Tailed Experiments}
\label{subsec:light_tails}
The light-tailed experiments use the following four datasets. We again record descriptive 4-tuples (dimensions $n \times d$, row, column, entry) below.
\begin{enumerate}
     \item ANSUR~\cite{GCCBM+88}: (1774 x 132, individual, measurement type, size in cm)
    \item ATUS~\cite{A16}: (5008 x 71, individual, ATUS activity, daily time spent in hours), restricted to first 100,000 rows
    \item NFL~\cite{S23}: (2555 x 6, NFL combine participant, drill, score)
    \item Powerlifting~\cite{O24}: (45268 x 5, lifter, metric, value), restricted to first 100,000 rows
\end{enumerate}
\begin{figure}[h]
  \centering
  \includegraphics[scale=0.4]{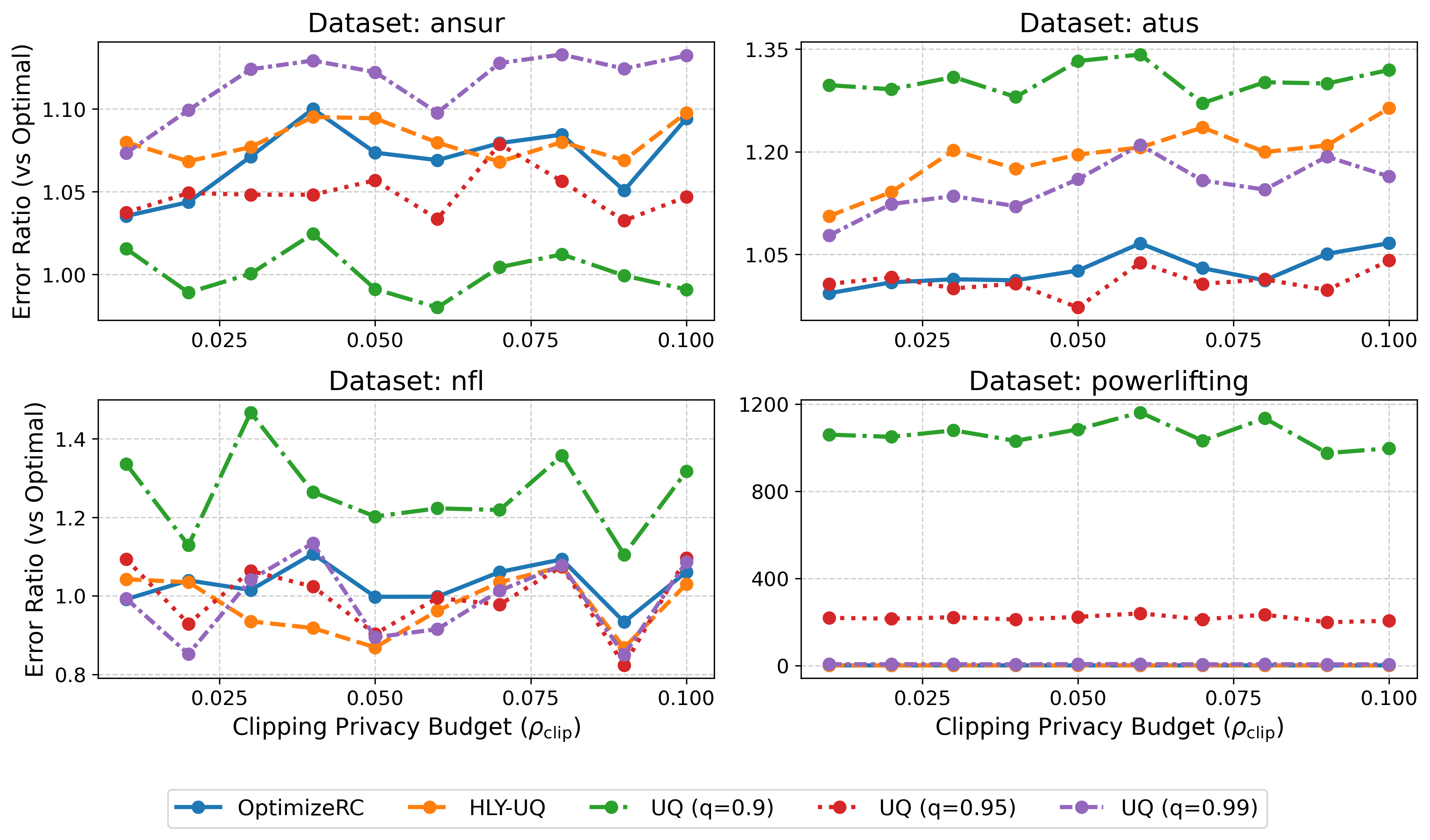}
  \caption{An analogue of Figure~\ref{fig:plots}, but for light-tailed datasets. Note that methods occasionally beat $\optalg$ on NFL because the low $d$ increases the noise variance.}
  \label{fig:light_plots}
\end{figure}

Note that $\rcalg$'s utility is similar to that of the heavy-tail setting, though the $\uqalg$ 0.99 and $\hlyalg$ baselines now often perform similarly well. Finally, we include clipping bound plots for completeness.

\begin{figure}[h]
  \centering
  \includegraphics[scale=0.4]{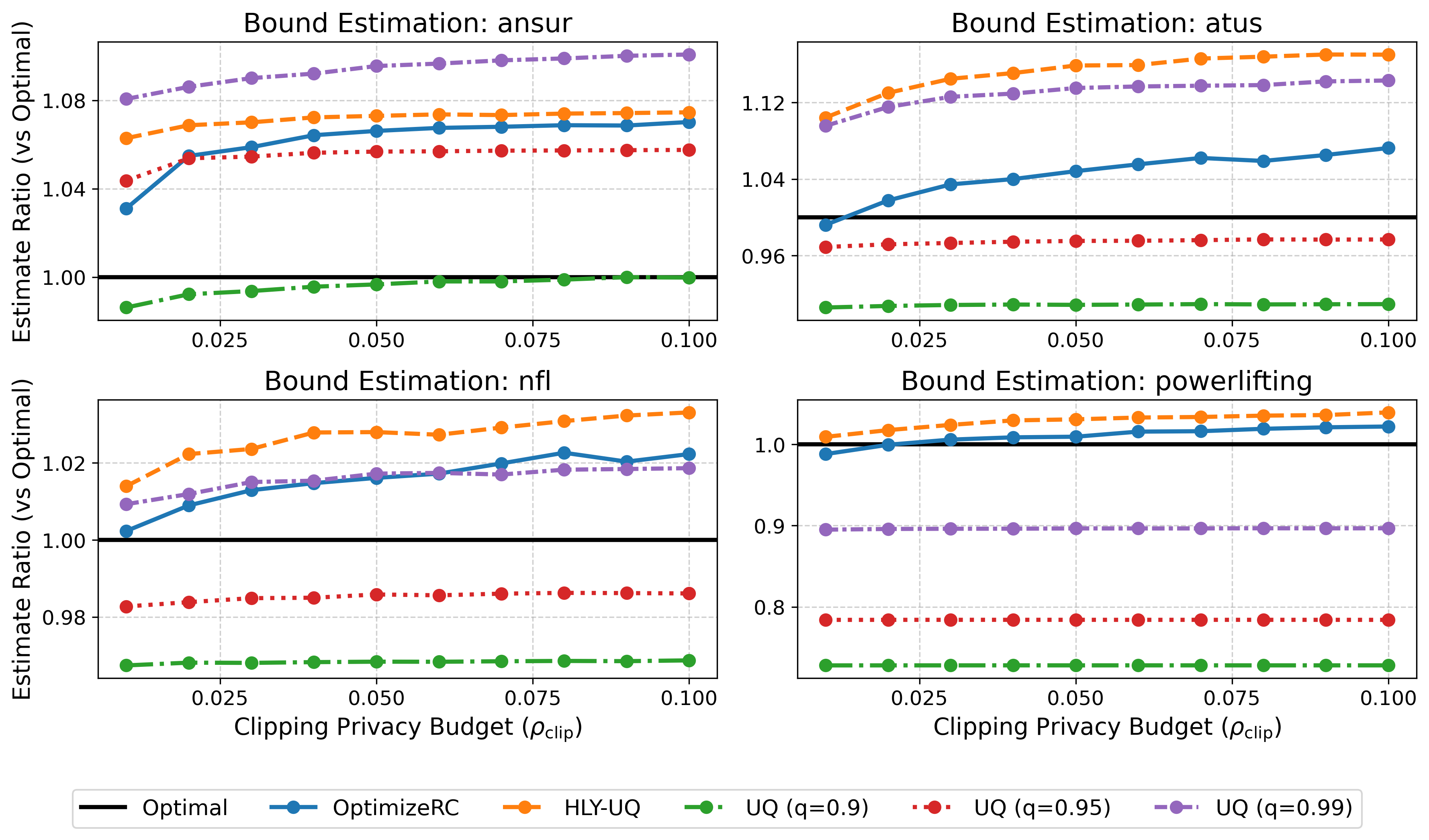}
  \caption{An analogue of Figure~\ref{fig:clip_plots}, but for light-tailed datasets.}
  \label{fig:light_clip_plots}
\end{figure}